\documentclass[preprint]{elsarticle}

\pdfoutput=1

\usepackage{amssymb}
\usepackage{amsmath}
\usepackage{amstext}
\usepackage{amsthm}
\usepackage{color,hcolor}
\usepackage{graphicx}
\usepackage[colorlinks, citecolor=blue, linkcolor=blue,
urlcolor=blue]{hyperref}
\usepackage{amsfonts}\usepackage{mathrsfs}
\usepackage[normalem]{ulem}
\usepackage{mathptmx}

\DeclareSymbolFont{cmsy}{OMS}{cmsy}{m}{n}
\DeclareSymbolFontAlphabet{\mathcalus}{cmsy}

\newtheorem{theorem}{Theorem}[]
\newtheorem{proposition}[theorem]{Proposition}

\newcommand{\mrv}[1]{\mathbf{#1}}

\newcommand{\transp}[0]{\mathrm{T}}

\newcommand{\rel}[0]{\mathrm{rel}}
\newcommand{\secur}[0]{\mathrm{sec}}

\newcommand{\tr}{\operatorname{tr}}
\newcommand{\ket}[1]{\lvert{#1} \rangle}

\newcommand{\bra}[1]{{\langle {#1}\rvert}}

\allowdisplaybreaks
\newcommand{\bi}{\mathbf{i}}

\newcommand{\bc}{\mathbf{c}}
\newcommand{\bk}{\mathbf{k}}

\newcommand{\bb}{\mathbf{b}}
\newcommand{\bx}{\mathbf{x}}

\newcommand{\bz}{\mathbf{z}}
\newcommand{\bT}{\mathbf{T}}

\newcommand{\bxi}{{\boldsymbol{\xi}}}

\newcommand{\bs}{\mathbf{s}}
\newcommand{\Eve}{{\operatorname{Eve}}}

\newcommand{\Span}{{\operatorname{Span}}}

\begin{document}

\begin{frontmatter}

\title{Composable security against collective attacks of
a modified BB84 QKD protocol with information only
in one basis\tnoteref{prelimver}}

\tnotetext[prelimver]{A preliminary version of this paper appeared in
\emph{Proceedings of the 2nd International Conference on Complexity,
Future Information Systems and Risk -- COMPLEXIS, 24-26 April, 2017,
Porto, Portugal}~\cite{BLM2017_bb84_info_z}.}

\author[montreal]{Michel Boyer}
\ead{boyer@iro.umontreal.ca}
\address[montreal]{D\'epartement IRO, Universit\'e de Montr\'eal,
Montr\'eal (Qu\'ebec) H3C 3J7, Canada}
\author[technion]{Rotem Liss\corref{cor}}
\ead{rotemliss@cs.technion.ac.il}
\cortext[cor]{Corresponding author}
\author[technion]{Tal Mor}
\ead{talmo@cs.technion.ac.il}
\address[technion]{Computer Science Department,
Technion, Haifa 3200003, Israel}

\begin{abstract}
Quantum Cryptography uses the counter-intuitive properties
of Quantum Mechanics for performing cryptographic tasks in
a secure and reliable way.
The Quantum Key Distribution (QKD) protocol BB84 has been proven secure
against several important types of attacks:
collective attacks and joint attacks.
Here we analyze the security of a modified BB84
protocol, for which information is sent only in the $z$ basis
while testing is done in both the $z$ and the $x$ bases,
against collective attacks.
The proof follows the framework of a previous paper~\cite{BGM09},
but it avoids a classical information-theoretical analysis
and proves a fully composable security.
We show that this modified BB84 protocol is as
secure against collective attacks as the original BB84 protocol,
and that it requires more bits for testing.
\end{abstract}

\begin{keyword}
Collective Attacks\sep Quantum Key Distribution\sep Cryptography\sep
Error Rate\sep Test Bits\sep Information Bits.
\end{keyword}

\end{frontmatter}

\section{Introduction}
Cryptography is the science of protecting the security and correctness
of data against adversaries. One of the most important cryptographic
problems is the problem of \emph{encryption} -- namely, of transmitting
a secret message from a sender to a receiver.
Two main encryption methods are used today:
\begin{enumerate}
\item In symmetric-key cryptography, the same \emph{secret key}
is used for both the sender and the receiver:
the sender uses the secret key for encrypting his or her message,
and the receiver uses the same secret key for decrypting the message.
Examples of symmetric-key ciphers include
the Advanced Encryption Standard (AES)~\cite{AESbook},
the older Data Encryption Standard (DES),
and one-time pad (``Vernam cipher'').
\item In public-key cryptography~\cite{diffie_hellman76},
a \emph{public key} (known to everyone) and a \emph{secret key}
(known only to the receiver) are used:
the sender uses the public key for encrypting his or her message,
and the receiver uses the secret key for decrypting the message.
Examples of public-key ciphers include RSA~\cite{rsa} and
elliptic curve cryptography.
\end{enumerate}
One of the main problems with current public-key cryptography is that
its security is usually not formally proved.
Moreover, its security relies on the computational hardness
of specific computational problems,
such as integer factorization and discrete logarithm
(that can both be efficiently solved on a quantum computer,
by using Shor's factorization algorithm~\cite{shor94};
therefore, if a scalable quantum computer is successfully built
in the future, the security of many public-key ciphers,
including RSA and elliptic curve cryptography, will be broken).
Symmetric-key cryptography requires a secret key to be shared
\emph{in advance} between the sender and the receiver (in other words,
if the sender and the receiver want to share a secret message,
they must share a secret key beforehand). Moreover, no security proofs
for many current symmetric-key ciphers, such as AES and DES, are known
(even if one is allowed to rely on
the computational hardness of problems),
and unconditional security proofs
against computationally-unlimited adversaries are
impossible unless the secret key is used only once and
is at least as long as the secret message~\cite{shannon_secrecy49}.

The one-time pad (symmetric-key) cipher, that, given a message $M$
and a secret key $K$ of the same length,
defines the encrypted message $C$ to be $C = M \oplus K$
(and then decryption can be performed by computing $M = C \oplus K$),
is fully and unconditionally secure
against any adversary~\cite{shannon_secrecy49}:
namely, even if the adversary Eve intercepts the encrypted message $C$,
she gains no information about the original message $M$
(assuming that she has no information about the secret key $K$).
This means that, for obtaining perfect secrecy, all that is needed
is an efficient way for sharing a random secret key between the sender
and the receiver; unfortunately, ``classical key distribution''
cannot be achieved in a fully secure way if the adversary can listen
to all the communication between the sender and the receiver.

Quantum key distribution (QKD) protocols take advantage of the laws
of quantum mechanics for achieving fully and unconditionally secure
key distribution, so that their resulting final key can
later be used by other cryptographic primitives
(e.g., one-time pad encryption).
Most of the QKD protocols have security proofs applicable
even against adversaries whose only limitations are the laws of nature
(and who are otherwise capable of solving any computational problem
and of performing any physically-allowed operation).
The two parties (the first party is usually named ``Alice'',
and the second party is usually named ``Bob'')
want to create a shared random key, and they use
an insecure quantum channel and an unjammable classical channel
(to which the adversary may listen, but not interfere).
The adversary (eavesdropper), Eve, tries to get as much information
as she can on the final shared key.
The first and most important QKD protocol is BB84~\cite{BB84},
that uses four possible quantum states (see details below),
and it has been proven fully and unconditionally secure.

Boyer, Gelles, and Mor~\cite{BGM09} discussed the security of
BB84 against collective attacks.
The class of the ``collective attacks''~\cite{BM97a,BM97b,BBBGM02}
is an important and powerful subclass of the joint attacks;
the class of the ``joint attacks'' includes all the theoretical attacks
allowed by quantum physics.
\cite{BGM09} improved the security proof of Biham, Boyer, Brassard,
van de Graaf, and Mor~\cite{BBBGM02}
against collective attacks,
by using some techniques of Biham, Boyer, Boykin, Mor,
and Roychowdhury~\cite{BBBMR06}
(that proved security against joint attacks).
In this paper, too, we restrict the analysis to collective attacks,
because security against collective attacks is conjectured
(and, in some security notions, proved~\cite{renner_thesis08,CKR09})
to imply security against joint attacks.
In addition, proving security against collective attacks is
much simpler than proving security against joint attacks.

Other QKD protocols, either similar to BB84 or ones that use
different approaches, have also been suggested,
and in some cases have also been proven fully secure.
In particular, the ``three-state protocol''~\cite{Mor98}
uses only three quantum states,
and it has been proven secure~\cite{mor98_sec1,mor98_sec2,mor98_sec3};
the ``classical Bob'' protocol~\cite{cbob07} is a two-way protocol
such that only Alice has quantum capabilities
and Bob has only classical
capabilities, and it has been proven robust~\cite{cbob07}
and secure~\cite{cbob_security15};
and the ``classical Alice'' protocol~\cite{calice09}
is similar to ``classical Bob'' with Alice being
the classical participant instead of Bob,
and it has been proven robust~\cite{calice09comment}.

The above QKD protocols are all ``Discrete-Variable'' protocols.
Two other classes of QKD protocols, ``Continuous-Variable'' protocols
and ``Distributed-Phase-Reference'' protocols,
have also been suggested;
their security proofs are still weaker than the security proofs
of ``Discrete-Variable'' protocols
(see~\cite{sec_practical09} for details).

QKD protocols can be used as a subroutine (secure key distribution)
of more complicated cryptographic protocols. In other words, they can
be integrated into a system in order to improve its security.
See~\cite{SML10} for more details about this integration
and about the practical usability of QKD compared to other methods.

In many QKD protocols, including BB84, Alice and Bob exchange
several types of bits (encoded as quantum systems, usually qubits):
INFO bits, that are secret bits shared by Alice and Bob and are used
for generating the final key (via classical processes of
error correction and privacy amplification);
and TEST bits, that are publicly exposed by Alice and Bob
(by using the classical channel) and are used for
estimating the error rate. In BB84, each bit is sent from Alice
to Bob in a random basis (the $z$ basis or the $x$ basis).

In this paper, we extend the analysis of BB84 done in~\cite{BGM09}
and prove the security of a QKD protocol
we shall name \emph{BB84-INFO-$z$}.
This protocol is almost identical to BB84, except that all its INFO
bits are in the $z$ basis.
In other words, the $x$ basis is used only for testing. The bits
are thus partitioned into three disjoint sets: INFO, TEST-Z,
and TEST-X. The sizes of these sets are arbitrary
($n$ INFO bits, $n_z$ TEST-Z bits, and $n_x$ TEST-X bits).

We note that, while this paper follows a line of research that mainly
discusses a specific approach of security proof for BB84
and similar protocols (this approach,
notably, considers finite-key effects
and not only the asymptotic error rate),
many other approaches have also been suggested: see for example
\cite{bb84_sec_mayers,bb84_sec_SP,renner_thesis08,bb84_sec_renner}.
For comparison, see Section~\ref{conclusion}.

In the other papers (\cite{BM97a,BM97b,BBBGM02,BBBMR06,BGM09})
that discussed the same approach of security proofs as discussed here,
the classical mutual information between Eve and the final key
was calculated and bounded, which caused problems with composability
(see definition in~\cite{renner_thesis08}
and in Subsection~\ref{qkd_security_definitions}).
In contrast to those papers, in this paper we suggest a method
to prove a fully composable security -- namely, to calculate and bound
the trace distance between the quantum state at the end of the real
protocol and the quantum state at the end of an ideal protocol.
This method is fully composable, because it bounds the distance
between \emph{quantum} states instead of bounding
the \emph{classical} information Eve has
(bounding the classical information means,
in particular, that we assume that Eve measures at the end of
the protocol, while in reality she is not required to measure then,
but is allowed to wait until Alice and Bob use the final key).
This method is implemented in this paper for proving the fully
composable security of BB84-INFO-$z$ against collective attacks;
it also directly applies to the BB84 security proof
in~\cite{BGM09} against collective attacks, proving
the fully composable security of BB84 against collective attacks.
It may be extended in the future to show that the BB84
security proof of~\cite{BBBMR06} proves
the fully composable security of BB84 against joint attacks.
(We note that in the conference version of this
paper~\cite{BLM2017_bb84_info_z}, we used a weaker security definition:
it was not sufficient for proving fully composable security,
but it was more composable than in the previous papers.)

The ``qubit space'', $\mathcal{H}_2$, is a $2$-dimensional
Hilbert space. The states $\ket{0^0}, \ket{1^0}$ form an orthonormal
basis of $\mathcal{H}_2$, called ``the computational basis'' or
``the $z$ basis''. The states
$\ket{0^1} \triangleq \frac{\ket{0^0} + \ket{1^0}}{\sqrt{2}}$ and
$\ket{1^1} \triangleq \frac{\ket{0^0} - \ket{1^0}}{\sqrt{2}}$
form another orthonormal basis of $\mathcal{H}_2$,
called ``the $x$ basis''. Those two bases are said
to be \emph{conjugate bases}.

In this paper, we denote bit strings (of $t$ bits,
with $t \ge 0$ being some integer) by a bold letter
(e.g., $\bi=i_1\ldots i_t$ with $i_1,\ldots,i_t \in \{0,1\}$);
and we refer to those bit strings as elements of $\mathbf{F}_2^t$
-- that is, as elements of a $t$-dimensional vector space over
the field $\mathbf{F}_2 = \{0,1\}$, where addition of two vectors
corresponds to a XOR operation between them.
The number of $1$-bits in a bit string $\bs$ is denoted by $|\bs|$,
and the Hamming distance between two strings $\bs$ and $\bs'$
is $d_H(\bs, \bs') = |\bs + \bs'|$.

\subsection{\label{qkd_security_definitions}Security Definitions of
Quantum Key Distribution}
Originally, a QKD protocol was defined to be secure if the
(classical) \emph{mutual information} between Eve's information
and the final key, maximized over all the possible attack strategies
and measurements by Eve, is exponentially small in the number of
qubits, $N$. Examples of security proofs of BB84 that use this
security definition are~\cite{bb84_sec_mayers,BBBMR06,bb84_sec_SP}.
Those security proofs used the observation that one cannot analyze
the \emph{classical} data held by Eve before privacy amplification
(as done in~\cite{BBCM95}), but must analyze the \emph{quantum} state
held by Eve~\cite{BMS96}. In other words, they assumed that Eve
could keep her quantum state until the end of the protocol,
and only \emph{then} choose the optimal measurement
(based on all the data she observed) and perform the measurement.

Later, it was noticed that this security definition may not be
``composable''. In other words, the final key is secure if Eve
measures the quantum state she holds at the end of the QKD protocol,
but the proof does not apply to \emph{cryptographic applications}
(e.g., encryption) of the final key:
Eve might gain non-negligible information after the key is used,
even though her information on the key itself was negligible.
This means that the proof is not sufficient for practical purposes.
In particular, those applications may be insecure if Eve keeps her
quantum state until Alice and Bob use the key (thus giving Eve some
new information) and only \emph{then} measures.

Therefore, a new notion of ``(composable) full security''
was defined~\cite{BHLMO05,bb84_sec_renner,renner_thesis08}
by using the trace distance between quantum states,
following universal composability definitions
for non-quantum cryptography~\cite{compos01_universal,compos00}.
Intuitively, this notion means that the final joint quantum state of
Alice, Bob, and Eve at the end of the protocol is \emph{very close}
(namely, the trace distance is exponentially small in $N$)
to their final state at the end of an \emph{ideal} key distribution
protocol, that distributes a \emph{completely random} and
\emph{secret} final key to both Alice and Bob.
In other words, if a QKD protocol is secure, then
except with an exponentially small probability,
one of the two following events happens:
the protocol is aborted, \emph{or} the secret key generated
by the protocol is the same as a perfect key
that is uniformly distributed
(i.e., each possible key having the same probability),
is the same for both parties,
and is independent of the adversary's information.

Formally, $\rho_{ABE}$ is defined as the final quantum state
of Alice, Bob, and Eve at the end of the protocol
(with Alice's and Bob's states being
simply the ``classical'' states $\ket{k_A}_A$ and $\ket{k_B}_B$,
where $k_A$ and $k_B$ are bit strings that are the final keys held
by Alice and Bob, respectively (note that usually $k_A = k_B$);
and with Eve's state including both her quantum probe and the
classical information published in the unjammable classical channel);
$\rho_U$ is defined as the complete mixture of all the possible keys
that are the same for Alice and Bob (namely, if the set of possible
final keys is $K$, then $\rho_U = \frac{1}{|K|}
\sum_{k \in K} \ket{k}_A \ket{k}_B \bra{k}_A \bra{k}_B$);
and $\rho_E$ is defined as the partial trace of $\rho_{ABE}$
over the system $AB$. For the QKD protocol
to be fully (and composably) secure, it is required that
\begin{equation}
\frac{1}{2} \tr \left| \rho_{ABE} - \rho_U \otimes \rho_E \right|
\le \epsilon, \label{definition_composable_security}
\end{equation}
where $\epsilon$ is exponentially small in $N$.
Intuitively, $\rho_{ABE}$ is the \emph{actual} joint state
of Alice, Bob, and Eve at the end of the QKD protocol;
$\rho_U$ is the \emph{ideal} final state of Alice and Bob
(an equal mixture of all the possible final keys,
that is completely uncorrelated with Eve
and is the same for Alice and Bob);
and $\rho_E$ is the state of Eve,
uncorrelated with the states of Alice and Bob.
Note that cases in which the protocol
is aborted are represented by the zero operator:
see~\cite[Subsection 6.1.2]{renner_thesis08} for details.

Composable security of many QKD protocols, including BB84,
has been proved~\cite{BHLMO05,bb84_sec_renner,renner_thesis08}.

\section{\label{bb84-info-z_description}Full Definition
of the ``BB84-INFO-$z$'' Protocol}
Below we formally define all the steps of the BB84-INFO-$z$ protocol,
as used in this paper.
\begin{enumerate}
\item Before the protocol, Alice and Bob choose
some shared (and public) parameters:
numbers $n$, $n_z$, and $n_x$ (we denote $N \triangleq n + n_z + n_x$),
error thresholds $p_{a,z}$ and $p_{a,x}$,
an $r \times n$ parity check matrix $P_C$
(corresponding to a linear error-correcting code $C$),
and an $m \times n$ privacy amplification matrix $P_K$
(representing a linear key-generation function).
It is required that \emph{all} the $r+m$ rows of the matrices
$P_C$ and $P_K$ put together are linearly independent.

\item Alice randomly chooses a partition
$\mathcalus{P} = (\bs, \bz, \bb)$ of the $N$ bits by randomly choosing
three $N$-bit strings $\bs, \bz, \bb \in \mathbf{F}^{N}_2$
that satisfy $|\bs| = n, |\bz| = n_z, |\bb| = n_x$, and
$|\bs + \bz + \bb| = N$. Thus, $\mathcalus{P}$ partitions the set
of indexes $\lbrace 1, 2, ..., N \rbrace$ into three disjoint sets:
\begin{itemize}
\item $I$ (INFO bits, where $s_j = 1$) of size $n$;
\item $T_Z$ (TEST-Z bits, where $z_j = 1$) of size $n_z$; and
\item $T_X$ (TEST-X bits, where $b_j = 1$) of size $n_x$.
\end{itemize}

\item Alice randomly chooses an $N$-bit string
$\bi \in \mathbf{F}^{N}_2$ and sends the $N$ qubit states
$\ket{i_1^{b_1}}, \ket{i_2^{b_2}}, \ldots, \ket{i_{N}^{b_{N}}}$,
one after the other, to Bob using the quantum channel.
Notice that Alice uses the $z$ basis for sending the INFO and TEST-Z bits,
and that she uses the $x$ basis for sending the TEST-X bits.
Bob keeps each received qubit in quantum memory,
not measuring it yet\footnote{
Here we assume that Bob has a quantum memory and can delay his
measurement. In practical implementations, Bob usually cannot do that,
but is assumed to measure in a randomly-chosen basis ($z$ or $x$),
so that Alice and Bob later discard the qubits measured in the wrong
basis. In that case, we need to assume that Alice sends more than
$N$ qubits, so that $N$ qubits are finally detected by Bob
and measured in the correct basis.
In the original scheme, the probability of choosing
each basis ($z$ or $x$) was $\frac{1}{2}$,
which caused half of the sent qubits to be lost;
in the improved scheme suggested by~\cite{LCA05},
the probability of choosing the $z$ basis can be much higher,
which means that much less qubits get lost.}.

\item Alice sends to Bob over the classical channel
the bit string $\bb = b_1 \ldots b_N$.
Bob measures each of the qubits he saved in the correct basis (namely,
when measuring the $i$-th qubit, he measures it in the $z$ basis
if $b_i = 0$, and he measures it in the $x$ basis if $b_i = 1$).

The bit string measured by Bob is denoted by $\bi^B$.
If there is no noise and no eavesdropping, then $\bi^B = \bi$.

\item Alice sends to Bob over the classical channel the bit string $\bs$.
The INFO bits (that will be used for creating the final key)
are the $n$ bits with $s_j = 1$,
while the TEST-Z and TEST-X bits (that will be used for testing)
are the $n_z + n_x$ bits with $s_j = 0$.
We denote the substrings of $\bi, \bb$ that correspond to the INFO bits
by $\bi_{\bs}$ and $\bb_{\bs}$, respectively.

\item Alice and Bob both publish the bit values they have for
all the TEST-Z and TEST-X bits, and they compare the bit values.
If more than $n_z \cdot p_{a,z}$ TEST-Z bits are different
between Alice and Bob \emph{or} more than $n_x \cdot p_{a,x}$
TEST-X bits are different between them,
they abort the protocol. We note that $p_{a,z}$ and $p_{a,x}$
(the pre-agreed error thresholds) are the maximal allowed
error rates on the TEST-Z and TEST-X bits, respectively --
namely, in each basis ($z$ and $x$) separately.

\item The values of the remaining $n$ bits
(the INFO bits, with $s_j = $1) are kept in secret by Alice and Bob.
The bit string of Alice is denoted $\bx = \bi_{\bs}$,
and the bit string of Bob is denoted $\bx^B$.

\item Alice sends to Bob the \emph{syndrome} of $\bx$
(with respect to the error-correcting code $C$
and to its corresponding parity check matrix $P_C$),
that consists of $r$ bits and is defined as
$\bxi = \bx P_C^\transp$.
By using $\bxi$, Bob corrects the errors in his $\bx^B$ string
(so that it is the same as $\bx$).

\item The final key consists of $m$ bits and is defined as
${\bf k} = \bx P_K^\transp$. Both Alice and Bob compute it.
\end{enumerate}

The protocol is defined similarly to BB84 (and to its description
in~\cite{BGM09}), except that it uses the generalized bit numbers
$n$, $n_z$, and $n_x$ (numbers of INFO, TEST-Z, and TEST-X bits,
respectively);
that it uses the partition $\mathcalus{P} = (\bs, \bz, \bb)$ for
dividing the $N$-bit string $\bi$ into three disjoint sets of indexes
($I$, $T_Z$, and $T_X$); and that it uses two separate thresholds
($p_{a,z}$ and $p_{a,x}$) instead of one ($p_a$).

\section{Proof of Security for the BB84-INFO-$z$ Protocol
Against Collective Attacks}
\subsection{The General Collective Attack of Eve}
Before the QKD protocol is performed (and, thus, independently of
$\bi$ and $\mathcalus{P}$), Eve chooses some collective
attack to perform. A \emph{collective attack} is bitwise:
it attacks each qubit separately,
by using a separate probe (ancillary state).
Each probe is attached by Eve to the quantum state,
and Eve saves it in a quantum memory.
Eve can keep her quantum probes indefinitely, even after the final key
is used by Alice and Bob; and she can perform,
at any time of her choice, an optimal measurement
of all her probes together, chosen based on all the information she has
at the time of the measurement (including the classical information
sent during the protocol, and including the information she acquires
when Alice and Bob use the key).

Given the $j$-th qubit $\ket{i_j^{b_j}}_{T_j}$ sent
from Alice to Bob ($1 \le j \le N$),
Eve attaches a probe state $\ket{0^E}_{E_j}$
and applies some unitary operator $U_j$ of her choice to the
compound system $\ket{0^E}_{E_j} \ket{i_j^{b_j}}_{T_j}$.
Then, Eve keeps to herself (in a quantum memory) the subsystem $E_j$,
which is her probe state; and sends to Bob the subsystem $T_j$,
which is the qubit sent from Alice to Bob
(which may have been modified by her attack $U_j$).

The most general collective attack $U_j$ of Eve on the $j$-th qubit,
represented in the orthonormal basis
$\{\ket{0^{b_j}}_{T_j}, \ket{1^{b_j}}_{T_j}\}$, is
\begin{eqnarray}
U_j \ket{0^E}_{E_j} \ket{0^{b_j}}_{T_j}
&=& \ket{E^{b_j}_{00}}_{E_j} \ket{0^{b_j}}_{T_j}
+ \ket{E^{b_j}_{01}}_{E_j} \ket{1^{b_j}}_{T_j} \\
U_j \ket{0^E}_{E_j} \ket{1^{b_j}}_{T_j}
&=& \ket{E^{b_j}_{10}}_{E_j} \ket{0^{b_j}}_{T_j}
+ \ket{E^{b_j}_{11}}_{E_j} \ket{1^{b_j}}_{T_j},
\end{eqnarray}
where $\ket{E^{b_j}_{00}}_{E_j}$, $\ket{E^{b_j}_{01}}_{E_j}$,
$\ket{E^{b_j}_{10}}_{E_j}$, and $\ket{E^{b_j}_{11}}_{E_j}$
are non-normalized states in Eve's probe system $E_j$
attached to the $j$-th qubit.

We thus notice that Eve can modify the original \emph{product state}
of the compound system, $\ket{0^E}_{E_j} \ket{i_j^{b_j}}_{T_j}$,
into an \emph{entangled state}
(e.g., $\ket{E^{b_j}_{00}}_{E_j} \ket{0^{b_j}}_{T_j}
+ \ket{E^{b_j}_{01}}_{E_j} \ket{1^{b_j}}_{T_j}$).
Eve's attack may thus cause Bob's state
to become entangled with her probe.
On the one hand, this may give Eve some information on Bob's state;
on the other hand, this causes disturbance that may be detected by Bob.
The security proof shows that the information obtained by Eve and
the disturbance caused by Eve are inherently correlated:
this is the basic reason QKD protocols are secure.

\subsection{Results from~\cite{BGM09}}
The security proof of BB84-INFO-$z$ against collective attacks is very
similar to the security proof of BB84 itself against collective
attacks, that was detailed in~\cite{BGM09}. Most parts of the proof are
not affected at all by the changes made to BB84 to get the
BB84-INFO-$z$ protocol (changes detailed in
Section~\ref{bb84-info-z_description} of the current paper),
because those parts assume fixed strings $\bs$ and $\bb$,
and because the attack is collective (so the analysis is
restricted to the INFO bits).

Therefore, the reader is referred to the proof in Section~2 and
Subsections~3.1 to~3.5 of~\cite{BGM09}, that applies to BB84-INFO-$z$
without any changes (except changing the total number of bits,
$2n$, to $N$, which does not affect the proof at all),
and that will not be repeated here.

We denote the rows of the error-correction parity check matrix $P_C$
as the vectors $v_1,\ldots,v_r$ in $\mathbf{F}_2^n$,
and the rows of the privacy amplification matrix $P_K$
as the vectors $v_{r+1}, \ldots,v_{r+m}$.
We also define, for every $r'$,
$V_{r'} \triangleq \Span \lbrace v_1, ..., v_{r'} \rbrace$;
and we define
\begin{equation}
d_{r,m} \triangleq \min_{r
\leq r' < r+m} d_H(v_{r'+1}, V_{r'}) = \min_{r \leq r' < r+m}
d_{r',1}.
\end{equation}

For a $1$-bit final key $k \in \lbrace 0, 1 \rbrace$, we define
$\widehat{\rho}_k$ to be the state of Eve corresponding to the final
key $k$, given that she knows $\bxi$. Thus,
\begin{align}\label{rhohatk}
\widehat{\rho}_k &= \frac{1}{2^{n-r-1}}\sum_{\bx\, \big|
{\scriptsize\begin{matrix}\bx P_C^\transp = \bxi\\
\bx \cdot v_{r+1} = k\end{matrix}}} \rho_\bx^{\bb'},
\end{align}
where $\rho_\bx^{\bb'}$ is Eve's state after the attack, given that
Alice sent the INFO bit string $\bx$
encoded in the bases $\bb' = \bb_\bs$.
In~\cite{BGM09}, the state $\widetilde\rho_k$ was also defined:
it is a lift-up of $\widehat{\rho}_k$
(which means that $\widehat{\rho}_k$
is a partial trace of $\widetilde\rho_k$),
in which the states $\rho_\bx^{\bb'}$
appearing in $\widehat{\rho}_k$
are replaced by their purifications
(see full definition in Subsection~3.4 of~\cite{BGM09}).

In the end of Subsection~3.5 of~\cite{BGM09}, it was found that
(in the case of a $1$-bit final key, i.e., $m=1$)
\begin{equation}\label{boundrho}
\frac{1}{2} \tr |\widetilde\rho_0 - \widetilde\rho_1| \leq 2
\sqrt{P\left[|\mathbf{C}_{I}| \geq \frac{d_{r,1}}{2} \ \mid
\ \mathbf{B}_{I} =
\overline{\bb'}, \bs \right]},
\end{equation}
where $\mathbf{C}_{I}$ is a random variable whose value is
the $n$-bit string of errors on the $n$ INFO bits;
$\mathbf{B}_{I}$ is a random variable whose value is
the $n$-bit string of bases of the $n$ INFO bits;
$\overline{\bb'}$ is the bit-flipped string of $\bb' = \bb_\bs$;
and $d_{r,1}$ (and, in general, $d_{r,m}$) was defined above.

Now, according to \cite[Theorem~9.2 and page~407]{NCBook}, and using
the fact that $\widehat{\rho}_k$ is a partial trace of
$\widetilde\rho_k$, we find that
$\frac{1}{2} \tr |\widehat{\rho}_0 - \widehat{\rho}_1| \le
\frac{1}{2} \tr |\widetilde\rho_0 - \widetilde\rho_1|$. From this
result and from inequality \eqref{boundrho} we deduce that
\begin{equation}\label{boundrho2}
\frac{1}{2} \tr |\widehat{\rho}_0 - \widehat{\rho}_1| \leq 2
\sqrt{P\left[|\mathbf{C}_{I}| \geq \frac{d_{r,1}}{2} \ \mid
\ \mathbf{B}_{I} =
\overline{\bb'}, \bs \right]}.
\end{equation}

\subsection{Bounding the Differences Between Eve's States}
We define $\bc \triangleq \bi + \bi^B$: namely, $\bc$ is the XOR
of the $N$-bit string $\bi$ sent by Alice and
of the $N$-bit string $\bi^B$ measured by Bob.
For all indexes $1 \le l \le N$, $c_l=1$ if and only if
Bob's $l$-th bit value is different from the $l$-th bit sent by Alice.
The partition $\mathcalus{P}$ divides the $N$ bits into
$n$ INFO bits, $n_z$ TEST-Z bits, and $n_x$ TEST-X bits.
The corresponding substrings of the error string $\bc$ are
$\bc_\bs$ (the string of errors on the INFO bits),
$\bc_\bz$ (the string of errors on the TEST-Z bits), and
$\bc_\bb$ (the string of errors on the TEST-X bits).
The random variables that correspond to
$\bc_\bs$, $\bc_\bz$, and $\bc_\bb$ are denoted by
$\mrv{C}_I$, $\mrv{C}_{T_Z}$, and $\mrv{C}_{T_X}$, respectively.

We define $\widetilde{\mrv{C}_I}$ to be a random variable
whose value is the string of errors on the INFO bits
\emph{if Alice had encoded and sent the INFO bits in the $x$ basis}
(instead of the $z$ basis dictated by the protocol).
In those notations, inequality~\eqref{boundrho2} reads as
\begin{equation}
\frac{1}{2} \tr |\widehat{\rho}_0 - \widehat{\rho}_1| \leq 2
\sqrt{P\left[|\widetilde{\mathbf{C}_I}| \geq \frac{d_{r,1}}{2}
\ \mid\ \mathcalus{P} \right]} = 2
\sqrt{P\left[|\widetilde{\mathbf{C}_I}| \geq \frac{d_{r,1}}{2}
\ \mid\ \bc_\bz, \bc_\bb, \mathcalus{P} \right]},\label{sdrewritten}
\end{equation}
using the fact that Eve's attack is collective, so the qubits
are attacked independently, and, therefore, the errors on the INFO bits
are independent of the errors on the TEST-Z and TEST-X bits
(namely, of $\bc_\bz$ and $\bc_\bb$).

As explained in~\cite{BGM09}, inequality~\eqref{sdrewritten}
was not derived for the actual attack
$U = U_1\otimes \ldots \otimes U_N$ applied
by Eve, but for a virtual flat attack (that depends on $\bb$
and therefore could not have been applied by Eve).
That flat attack gives the same states $\widehat{\rho}_0$
and $\widehat{\rho}_1$ as given by the original attack $U$,
and it gives a lower (or the same) error rate in the conjugate basis.
Therefore, inequality~\eqref{sdrewritten} holds for
the original attack $U$, too.
This means that, starting from this point, all our results apply to
the original attack $U$ rather than to the flat attack.

So far, we have discussed a $1$-bit key.
We will now discuss a general $m$-bit key $\bk$.
We define $\widehat{\rho}_\bk$ to be the state of Eve corresponding
to the final key $\bk$, given that she knows $\bxi$:
\begin{align}\label{rhohatk_multi}
\widehat{\rho}_\bk &= \frac{1}{2^{n-r-m}}\sum_{\bx\,
\big|{\scriptsize\begin{matrix}\bx P_C^\transp =
\bxi\\ \bx P_K^\transp = \bk\end{matrix}}} \rho_\bx^{\bb'}
\end{align}

\begin{proposition}\label{probbound}
For any two keys $\bk,\bk'$ of $m$ bits,
\begin{equation}
\frac{1}{2} \tr |\widehat{\rho}_{\bk} - \widehat{\rho}_{\bk'}|
\leq 2m\sqrt{
P\left[| \widetilde{\mrv{C}_I} | \geq \frac{d_{r,m}}{2}
\mid \bc_\bz, \bc_\bb, \mathcalus{P}\right]}. \label{eqlemma}
\end{equation}
\end{proposition}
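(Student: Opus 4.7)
The plan is to reduce Proposition~\ref{probbound} to the $m=1$ inequality~\eqref{sdrewritten} by a telescoping hybrid through a tree of partially-constrained states. For every $0 \le j \le m$ and every $(k_1,\ldots,k_j)\in\mathbf{F}_2^j$, I would introduce the auxiliary state
\[
\widehat\rho^{(j)}_{k_1,\ldots,k_j} \;\triangleq\; \frac{1}{2^{n-r-j}} \sum_{\substack{\bx P_C^\transp = \bxi \\ \bx\cdot v_{r+i}=k_i,\ 1\le i\le j}} \rho_\bx^{\bb'},
\]
so that $\widehat\rho^{(m)}_\bk = \widehat\rho_\bk$ and, by the linear independence of the $r+m$ rows of $\binom{P_C}{P_K}$, each ``parent'' is the equal mixture of its two ``children'', $\widehat\rho^{(j-1)}_{k_1,\ldots,k_{j-1}} = \tfrac{1}{2}\bigl(\widehat\rho^{(j)}_{k_1,\ldots,k_{j-1},0} + \widehat\rho^{(j)}_{k_1,\ldots,k_{j-1},1}\bigr)$. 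As a consequence, for every parent--child edge,
\[
\tr\bigl|\widehat\rho^{(j)}_{k_1,\ldots,k_j} - \widehat\rho^{(j-1)}_{k_1,\ldots,k_{j-1}}\bigr|
 = \tfrac{1}{2}\tr\bigl|\widehat\rho^{(j)}_{k_1,\ldots,k_{j-1},0} - \widehat\rho^{(j)}_{k_1,\ldots,k_{j-1},1}\bigr|.
\]

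The crux is then to apply~\eqref{sdrewritten} at every level $j\in\{1,\ldots,m\}$, with the $(r+j-1)\times n$ matrix obtained by appending $v_{r+1},\ldots,v_{r+j-1}$ to $P_C$ playing the role of the parity-check matrix (and $(\bxi,k_1,\ldots,k_{j-1})$ the role of the syndrome), while $v_{r+j}$ plays the role of the single PA row. Because the derivation of~\eqref{sdrewritten} in~\cite{BGM09} uses only that the code rows and the PA row together form a linearly independent set and expresses its final bound in terms of the Hamming distance from the PA row to the span of the code rows, it transports verbatim: the relevant distance at level $j$ is $d_H(v_{r+j},V_{r+j-1}) = d_{r+j-1,1}$. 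Since $d_{r,m} = \min_{r\le r'<r+m} d_{r',1}$ gives $d_{r+j-1,1}\ge d_{r,m}$ for every $j\in\{1,\ldots,m\}$, the event $|\widetilde{\mrv{C}_I}|\ge d_{r+j-1,1}/2$ is contained in $|\widetilde{\mrv{C}_I}|\ge d_{r,m}/2$, and every parent--child edge therefore satisfies
\[
\tr\bigl|\widehat\rho^{(j)}_{k_1,\ldots,k_j}-\widehat\rho^{(j-1)}_{k_1,\ldots,k_{j-1}}\bigr|
\;\le\; 2\sqrt{P\left[|\widetilde{\mrv{C}_I}|\ge\tfrac{d_{r,m}}{2}\,\Big|\,\bc_\bz,\bc_\bb,\mathcalus{P}\right]}.
\]

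To conclude, I would let $j^*\ge 0$ denote the length of the longest common prefix of $\bk$ and $\bk'$ and walk through the tree: up from $\widehat\rho^{(m)}_\bk$ to the common ancestor $\widehat\rho^{(j^*)}_{(k_1,\ldots,k_{j^*})}$ in $m-j^*$ child--parent steps, then down to $\widehat\rho^{(m)}_{\bk'}$ in another $m-j^*$ steps. With at most $2(m-j^*)\le 2m$ edges, each contributing the bound above, the triangle inequality on trace distance yields $\tr|\widehat\rho_\bk-\widehat\rho_{\bk'}|\le 4m\sqrt{P[\,\cdots\,]}$, which is exactly~\eqref{eqlemma} after dividing by~$2$. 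The main technical point requiring care is the transportability of the~\cite{BGM09} derivation of~\eqref{sdrewritten} to every level $j$: one must check that no step of that proof implicitly distinguishes the rows of $P_C$ from the first $j-1$ rows of $P_K$, so that the bound at level $j$ truly involves $d_{r+j-1,1}$ (the distance from $v_{r+j}$ to the span of only the first $r+j-1$ rows) rather than a potentially strictly smaller Hamming distance to a larger span such as $\Span\{v_1,\ldots,v_{r+m}\}\setminus\{v_{r+j}\}$.
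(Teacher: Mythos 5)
Your proof is correct and arrives at the same constant $2m$, but by a genuinely different decomposition. The paper interpolates between $\bk$ and $\bk'$ through $m$ hybrid keys $\bk_j$ (first $j$ bits of $\bk'$, last $m-j$ bits of $\bk$), so every intermediate state is a fully-conditioned $m$-bit-key state $\widehat{\rho}_{\bk_j}$; each single-bit step is bounded by applying the $1$-bit argument with \emph{all} the other $m-1$ (identical) key bits attached to the syndrome $\bxi$, so the relevant distance is $d_j = d_H(v_{r+j}, V'_j)$ with $V'_j$ the span of all $r+m-1$ remaining rows, and the paper then needs a separate small argument (writing $d_j = \left|\sum_i a_i v_i\right|$ with $a_{r+j}\ne 0$ and peeling off the highest nonzero index) to conclude $d_j \ge d_{r,m}$. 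Your tree of partially-conditioned states $\widehat{\rho}^{(j)}_{k_1,\ldots,k_j}$ instead attaches only the first $j-1$ key bits to the syndrome, so the relevant distance at level $j$ is exactly $d_{r+j-1,1} = d_H(v_{r+j}, V_{r+j-1})$, and the inequality $d_{r+j-1,1}\ge d_{r,m}$ is immediate from the definition of $d_{r,m}$ as a minimum --- a small but genuine simplification. Your walk through the common ancestor uses up to $2m$ edges rather than $m$ steps, but since each parent--child edge costs only half of the corresponding two-children trace distance (via the parent-as-average identity, which you correctly justify by linear independence of the $r+m$ rows), the bookkeeping comes out the same. Both proofs rest on the same load-bearing assumption, which you rightly single out: that the derivation of inequality~\eqref{sdrewritten} in~\cite{BGM09} uses nothing about the code/PA split beyond linear independence, so it transports to any re-partition of the $r+m$ rows into ``syndrome rows'' and a single ``key row''; the paper invokes exactly this when it speaks of ``attaching the other (identical) key bits to $\bxi$ of the original proof.''
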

\begin{proof}
We define the key $\bk_j$, for $0 \le j \le m$, to consist of the first
$j$ bits of $\bk'$ and the last $m-j$ bits of $\bk$. This means that
$\bk_0 = \bk$, $\bk_m = \bk'$, and $\bk_{j-1}$ differs from $\bk_j$
at most on a single bit (the $j$-th bit).

First, we find a bound on $\frac{1}{2} \tr |\widehat{\rho}_{\bk_{j-1}}
- \widehat{\rho}_{\bk_j}|$: since $\bk_{j-1}$ differs from $\bk_j$
at most on a single bit (the $j$-th bit, given by the formula
$\bx \cdot v_{r+j}$), we can use the same proof
that gave us inequality~\eqref{sdrewritten}, attaching the other
(identical) key bits to $\bxi$ of the original proof; and we find that
\begin{equation}
\frac{1}{2} \tr |\widehat{\rho}_{\bk_{j-1}} - \widehat{\rho}_{\bk_j}|
\leq 2 \sqrt{P\left[|\widetilde{\mathbf{C}_I}| \geq \frac{d_j}{2}
\ \mid\ \bc_\bz, \bc_\bb, \mathcalus{P} \right]}, \label{StatesDiff1}
\end{equation}
where we define $d_j$ as $d_H(v_{r+j}, V'_j)$, and
$V'_j \triangleq \Span \lbrace v_1, v_2, \ldots, v_{r+j-1},
v_{r+j+1}, \ldots, v_{r+m} \rbrace$.

Now we notice that $d_j$ is the Hamming distance
between $v_{r+j}$ and some vector in $V'_j$,
which means that $d_j = \left|\sum_{i = 1}^{r+m} a_i v_i\right|$
with $a_i \in \mathbf{F}_2$ and $a_{r+j} \ne 0$.
The properties of Hamming distance assure us that $d_j$ is
at least $d_H(v_{r'+1}, V_{r'})$ for some $r \le r' < r + m$.
Therefore, we find that $d_{r,m} = \min_{r \leq r' < r+m}
d_H(v_{r'+1}, V_{r'}) \le d_j$.

The result $d_{r,m} \le d_j$ implies that
if $|\widetilde{\mathbf{C}_I}| \geq \frac{d_j}{2}$
then $|\widetilde{\mathbf{C}_I}|\geq \frac{d_{r,m}}{2}$.
Therefore, inequality~\eqref{StatesDiff1} implies
\begin{equation}
\frac{1}{2} \tr |\widehat{\rho}_{\bk_{j-1}} - \widehat{\rho}_{\bk_j}|
\leq 2 \sqrt{P\left[|\widetilde{\mathbf{C}_I}| \geq
\frac{d_{r,m}}{2} \ \mid \ \bc_\bz, \bc_\bb, \mathcalus{P} \right]}.
\label{StatesDiff2}
\end{equation}
Now we use the triangle inequality for norms to find
\begin{align}
\frac{1}{2} \tr |\widehat{\rho}_{\bk} - \widehat{\rho}_{\bk'}|
&= \frac{1}{2} \tr |\widehat{\rho}_{\bk_0} - \widehat{\rho}_{\bk_m}|
\leq \sum_{j = 1}^m \frac{1}{2} \tr |\widehat{\rho}_{\bk_{j-1}}
- \widehat{\rho}_{\bk_j}| \nonumber \\
&\leq 2m \sqrt{P\left[|\widetilde{\mathbf{C}_I}|
\geq \frac{d_{r,m}}{2}
\ \mid\ \bc_\bz, \bc_\bb, \mathcalus{P} \right]},\label{StatesDiff3}
\end{align}
as we wanted.
\end{proof}

We would now like to bound the expected value
(namely, the average value) of the trace distance between
two states of Eve corresponding to two final keys.
However, we should take into account that if the test fails,
no final key is generated,
in which case we define the distance to be $0$.
We thus define the random variable
$\Delta_\Eve^{(p_{a,z}, p_{a,x})}(\bk,\bk')$
for any two final keys $\bk,\bk'$:
\begin{equation}
\Delta_\Eve^{(p_{a,z}, p_{a,x})}(\bk, \bk' | \mathcalus{P},\bxi,
\bc_\bz, \bc_\bb)
\triangleq
\begin{cases} 
\frac{1}{2} \tr |\widehat{\rho}_{\bk} - \widehat{\rho}_{\bk'}| 
& \text{if $\displaystyle \frac{|\bc_\bz|}{n_z} \leq p_{a,z}$ and
$\displaystyle \frac{|\bc_\bb|}{n_x} \leq p_{a,x}$} \\
0 & \text{otherwise} \end{cases}\label{defipa}
\end{equation}

We need to bound the expected value
$\langle \Delta_\Eve^{(p_{a,z}, p_{a,x})}(\bk, \bk') \rangle$,
that is given by:
\begin{equation}
\langle \Delta_\Eve^{(p_{a,z}, p_{a,x})}(\bk, \bk')\rangle
= \sum_{\mathcalus{P},\bxi, \bc_\bz, \bc_\bb}
\Delta_\Eve^{(p_{a,z}, p_{a,x})}(\bk, \bk' | \mathcalus{P},\bxi,
\bc_\bz, \bc_\bb)
\cdot p(\mathcalus{P},\bxi, \bc_\bz, \bc_\bb)\label{eqipa}
\end{equation}
(In Subsection~\ref{sec_full_composability} we prove that this
expected value is indeed the quantity we need to bound
for proving fully composable security,
defined in Subsection~\ref{qkd_security_definitions}.)

\begin{theorem}\label{thmsecurity}
\begin{equation}
\langle \Delta_\Eve^{(p_{a,z}, p_{a,x})}(\bk, \bk')\rangle \leq 2m
\sqrt{P\left[\textstyle \left( \frac{| \widetilde{\mrv{C}_I}|}{n} \geq
\frac{d_{r,m}}{2n}\right)
\wedge \left(\frac{|\mrv{C}_{T_Z}|}{n_z} \leq
p_{a,z}\right)
\wedge \left(\frac{|\mrv{C}_{T_X}|}{n_x} \leq
p_{a,x}\right) \right]} \label{boundotherbasis}
\end{equation}
where $\frac {|\widetilde{\mrv{C}_I}|}{n}$ is a random variable
whose value is the error rate on the INFO bits
if they had been encoded in the $x$ basis,
$\frac{|\mrv{C}_{T_Z}|}{n_z}$ is a random variable
whose value is the error rate on the TEST-Z bits,
and $\frac{|\mrv{C}_{T_X}|}{n_x}$ is a random variable
whose value is the error rate on the TEST-X bits.
\end{theorem}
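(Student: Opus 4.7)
The plan is to deduce Theorem~\ref{thmsecurity} from Proposition~\ref{probbound} by averaging over the randomness in $(\mathcalus{P}, \bxi, \bc_\bz, \bc_\bb)$ and then applying Jensen's inequality (concavity of $\sqrt{\cdot}$) to pull the expectation inside the square root, converting the conditional probability bound into an unconditional joint probability that also encodes the test-pass events.

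The first step is to substitute the bound of Proposition~\ref{probbound} into the definition~\eqref{defipa}: on the test-pass event $\{|\bc_\bz|/n_z \le p_{a,z}\} \cap \{|\bc_\bb|/n_x \le p_{a,x}\}$, the quantity $\Delta_\Eve^{(p_{a,z}, p_{a,x})}$ is dominated by $2m \sqrt{P[|\widetilde{\mrv{C}_I}| \ge d_{r,m}/2 \mid \bc_\bz, \bc_\bb, \mathcalus{P}]}$, and off this event it vanishes. A crucial observation is that this bound does not depend on the syndrome $\bxi$. Therefore, summing the formula~\eqref{eqipa} first over $\bxi$ (using $\sum_\bxi p(\bxi \mid \mathcalus{P}, \bc_\bz, \bc_\bb) = 1$) marginalizes that variable away and leaves an average purely over $(\mathcalus{P}, \bc_\bz, \bc_\bb)$, whose integrand is the indicator $\mathbf{1}_{\text{pass}}$ of the test-pass event multiplied by the above square root.

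The second step is Jensen's inequality. Since $\mathbf{1}_{\text{pass}}^2 = \mathbf{1}_{\text{pass}}$, the indicator can be absorbed under the square root; then concavity of $\sqrt{\cdot}$ together with $\sum_{\mathcalus{P}, \bc_\bz, \bc_\bb} p(\mathcalus{P}, \bc_\bz, \bc_\bb) = 1$ bounds the averaged square root by the square root of the averaged argument. That averaged argument is $\sum_{\mathcalus{P}, \bc_\bz, \bc_\bb} p(\mathcalus{P}, \bc_\bz, \bc_\bb) \cdot \mathbf{1}_{\text{pass}} \cdot P[|\widetilde{\mrv{C}_I}| \ge d_{r,m}/2 \mid \mathcalus{P}, \bc_\bz, \bc_\bb]$, which by the law of total probability equals precisely the joint probability of the three events appearing inside the square root in~\eqref{boundotherbasis}. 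This yields the claimed inequality.

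The main obstacle I expect is the bookkeeping around the counterfactual random variable $\widetilde{\mrv{C}_I}$: one must justify that this error string, defined under the hypothetical scenario in which the INFO bits are encoded in the $x$ basis, is jointly distributed with the actually observed errors $\mrv{C}_{T_Z}, \mrv{C}_{T_X}$ on the test bits in a way that makes conditional probabilities well-defined. This is legitimate precisely because Eve's attack is collective: the unitaries $U_j$ act independently on each subsystem and are chosen before the partition $\mathcalus{P}$ is announced, so the joint law of $(\widetilde{\mrv{C}_I}, \mrv{C}_{T_Z}, \mrv{C}_{T_X})$ given $\mathcalus{P}$ is unambiguous. Once this point and the $\bxi$-independence of the Proposition~\ref{probbound} bound are granted, the indicator trick and Jensen step become routine.
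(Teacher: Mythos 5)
Your proposal is correct and is essentially the paper's own argument: the paper squares both sides and invokes convexity of $x^2$ (i.e., $(\sum_i p_i x_i)^2 \le \sum_i p_i x_i^2$), which is exactly your Jensen step for $\sqrt{\cdot}$ in disguise, and it likewise marginalizes over $\bxi$, restricts to the test-pass event via definition~\eqref{defipa}, and recombines the conditional probability with the pass event into the joint probability. The only differences are cosmetic (order of the $\bxi$-marginalization and whether the indicator absorption is written explicitly), so there is nothing to flag.
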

\begin{proof}
We use the convexity of $x^2$,
namely, the fact that for all $\{p_i\}_i$
satisfying $p_i \geq 0$ and $\sum_i p_i = 1$, it holds that
$\left( \sum_i p_i x_i \right)^2 \leq \sum_i p_i x_i^2$. We find that:
\begin{align}
&\langle \Delta_\Eve^{(p_{a,z}, p_{a,x})}(\bk, \bk')\rangle^2 &
\nonumber \\
&=
\left[ \sum_{\mathcalus{P},\bxi, \bc_\bz, \bc_\bb}
\Delta_\Eve^{(p_{a,z},
p_{a,x})}(\bk, \bk' | \mathcalus{P},\bxi, \bc_\bz, \bc_\bb)
\cdot p( \mathcalus{P}, \bxi, \bc_\bz, \bc_\bb)\right]^2
\qquad\qquad\qquad &\text{(by \eqref{eqipa})} \nonumber \\
&\leq \sum_{\mathcalus{P},\bxi, \bc_\bz, \bc_\bb} \left(
\Delta_\Eve^{(p_{a,z}, p_{a,x})}(\bk, \bk' |
\mathcalus{P},\bxi, \bc_\bz,
\bc_\bb) \right)^2
\cdot p(\mathcalus{P},\bxi, \bc_\bz, \bc_\bb)
\qquad\qquad\qquad &\text{\llap{(by convexity of $x^2$)}} \nonumber \\
&=
\sum_{\mathcalus{P},\bxi,\frac{|\bc_\bz|}{n_z}\leq p_{a,z},
\frac{|\bc_\bb|}{n_x}\leq p_{a,x}} 
\left( \textstyle \frac{1}{2} \tr |\widehat{\rho}_{\bk} -
\widehat{\rho}_{\bk'}| \right)^2
\cdot p( \mathcalus{P}, \bxi, \bc_\bz, \bc_\bb)
\qquad\qquad\qquad &\text{(by \eqref{defipa})} \nonumber \\
&\leq
4m^2 \cdot \sum_{\mathcalus{P},\bxi,\frac{|\bc_\bz|}{n_z}\leq p_{a,z},
\frac{|\bc_\bb|}{n_x}\leq p_{a,x}}
P\left[\textstyle | \widetilde{\mrv{C}_I}| \geq \frac{d_{r,m}}{2} \mid
\bc_\bz, \bc_\bb, \mathcalus{P}\right]
\cdot p(\mathcalus{P},\bxi, \bc_\bz, \bc_\bb)
\qquad &\text{(by \eqref{eqlemma})} \nonumber \\
&=
4m^2 \cdot \sum_{\mathcalus{P},\frac{|\bc_\bz|}{n_z}\leq p_{a,z},
\frac{|\bc_\bb|}{n_x}\leq p_{a,x}}
P\left[ \textstyle | \widetilde{\mrv{C}_I}| \geq \frac{d_{r,m}}{2}
\mid \bc_\bz, \bc_\bb, \mathcalus{P}\right]
\cdot p(\mathcalus{P},\bc_\bz, \bc_\bb) & \nonumber \\
&=
4m^2 \cdot \sum_{\mathcalus{P}}
P\left[ \textstyle \left(| \widetilde{\mrv{C}_I} | \geq
\frac{d_{r,m}}{2}\right)
\wedge \left( \frac{|\mrv{C}_{T_Z}|}{n_z} \leq
p_{a,z}\right)
\wedge \left( \frac{|\mrv{C}_{T_X}|}{n_x} \leq p_{a,x}\right) \mid
\mathcalus{P} \right] \cdot p(\mathcalus{P}) & \nonumber \\ 
&= 4m^2 \cdot P\left[ \textstyle \left(| \widetilde{\mrv{C}_I} | \geq
\frac{d_{r,m}}{2}\right)
\wedge \left( \frac{|\mrv{C}_{T_Z}|}{n_z} \leq
p_{a,z}\right)
\wedge \left( \frac{|\mrv{C}_{T_X}|}{n_x} \leq p_{a,x}\right)\right] &
\end{align}
as we wanted.
\end{proof}

\subsection{\label{sec_proof}Proof of Security}
Following~\cite{BGM09} and~\cite{BBBMR06},
we choose matrices $P_C$ and $P_K$ such that the inequality
$\frac{d_{r,m}}{2n} > p_{a,x} + \epsilon$
is satisfied for some $\epsilon$ (we will explain in
Subsection~\ref{sec_rel_rate_thres} why this is possible).
This means that
\begin{align}
& P\left[\textstyle \left( \frac{| \widetilde{\mrv{C}_I}|}{n} \geq
\frac{d_{r,m}}{2n}\right) \wedge
\left(\frac{|\mrv{C}_{T_Z}|}{n_z} \leq p_{a,z}\right) \wedge
\left(\frac{|\mrv{C}_{T_X}|}{n_x} \leq p_{a,x}\right) \right]
\nonumber \\
&\leq P\left[\textstyle \left(\frac{|\widetilde{\mrv{C}_I}|}{n} >
p_{a,x} + \epsilon\right) \wedge \left(\frac{|\mrv{C}_{T_X}|}{n_x} \leq
p_{a,x}\right) \right]. \label{PcPkInequality}
\end{align}
We will now prove the right-hand-side of~\eqref{PcPkInequality}
to be exponentially small in $n$.

As said earlier, the random variable $\widetilde{\mrv{C}_I}$
corresponds to the bit string of errors on the INFO bits
if they had been encoded in the $x$ basis. The TEST-X bits are
also encoded in the $x$ basis, and the random variable $\mrv{C}_{T_X}$
corresponds to the bit string of errors on those bits.
Therefore, we can treat the selection of the indexes
of the $n$ INFO bits and the $n_x$ TEST-X bits
as a random sampling
(after the numbers $n$, $n_z$, and $n_x$ \emph{and}
the indexes of the TEST-Z bits have all already been chosen)
and use Hoeffding's theorem
(that is described in Appendix~A of~\cite{BGM09}).

Therefore, for each bit string $c_1\ldots c_{n+n_x}$ that consists
of the errors in the $n+n_x$ INFO and TEST-X bits
\emph{if the INFO bits had been encoded in the $x$ basis},
we apply Hoeffding's theorem: namely, we take a sample of size $n$
without replacement from the population $c_1, \ldots, c_{n+n_x}$
(this corresponds to the random selection of
the indexes of the INFO bits and the TEST-X bits, as defined above,
given that the indexes of the TEST-Z bits have already been chosen).
Let $\overline{X} = \frac{|\widetilde{\mrv{C}_I}|}{n}$
be the average of the sample (this is exactly the error rate
on the INFO bits, assuming, again, that the INFO bits had been encoded
in the $x$ basis); and let
$\mu = \frac{|\widetilde{\mrv{C}_I}| + |\mrv{C}_{T_X}|}{n + n_x}$ 
be the expectancy of $\overline{X}$ (this is exactly the error rate
on the INFO bits and TEST-X bits together).
Then $\frac{|\mrv{C}_{T_X}|}{n_x} \leq p_{a,x}$ is equivalent to
$(n + n_x)\mu - n\overline{X} \leq n_x \cdot p_{a,x}$,
and, therefore, to $n \cdot (\overline{X}-\mu)
\geq n_x \cdot (\mu - p_{a,x})$.
This means that the conditions
$\left(\frac{|\widetilde{\mrv{C}_I}|}{n} > p_{a,x}+\epsilon\right)$ and
$\left(\frac{|\mrv{C}_{T_X}|}{n_x} \leq p_{a,x}\right)$ rewrite to
\begin{equation}
\left(\overline{X} - \mu > \epsilon + p_{a,x} - \mu\right)
\wedge \left(\frac{n}{n_x} \cdot (\overline{X}-\mu)
\geq \mu - p_{a,x}\right), \label{inequalities}
\end{equation}
which implies $\left(1 + \frac{n}{n_x}\right)(\overline{X} -\mu) >
\epsilon$, which is equivalent to $\overline{X} -\mu >
\frac{n_x}{n + n_x} \epsilon$.
Using Hoeffding's theorem (from Appendix~A of~\cite{BGM09}), we get:
\begin{equation}
P\left[ \left(\frac{|\widetilde{\mrv{C}_I}|}{n} > p_{a,x} +
\epsilon\right) \wedge \left(\frac{|\mrv{C}_{T_X}|}{n_x} \leq
p_{a,x}\right) \right]
\leq P\left[ \overline{X} - \mu > \frac{n_x}{n + n_x}\epsilon\right]
\leq e^{-2 \left(\textstyle\frac{n_x}{n + n_x}\right)^2 n\epsilon^2}
\end{equation}

In the above discussion, we have actually proved the following Theorem:
\begin{theorem}\label{thmsecurity1}
Let us be given $\delta > 0$, $R > 0$, and,
for infinitely many values of $n$,
a family $\{v^n_1, \ldots, v^n_{r_n+m_n}\}$ of
linearly independent vectors in $\mathbf{F}_2^n$ such that
$\delta < \frac{d_{r_n,m_n} }{n}$ and $\frac{m_n}{n} \leq R$.
Then for any $p_{a,z}, p_{a,x} > 0$ and $\epsilon_\secur > 0$
such that $p_{a,x} + \epsilon_\secur \leq \frac{\delta}{2}$,
and for any $n, n_z, n_x > 0$ and two $m_n$-bit final keys
$\bk,\bk'$, the distance between Eve's states
corresponding to $\bk$ and $\bk'$ satisfies the following bound:
\begin{equation}\label{EveInfoBound2}
\langle \Delta_\Eve^{(p_{a,z}, p_{a,x})}(\bk, \bk')\rangle \leq 2R\,
n e^{- \left(\textstyle\frac{n_x}{n + n_x}\right)^2 n \epsilon^2_\secur}
\end{equation}
\end{theorem}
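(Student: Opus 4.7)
The plan is to chain three already-established pieces: Theorem~\ref{thmsecurity}, the inequality~\eqref{PcPkInequality} (together with the condition on $P_C, P_K$ that licenses it), and the Hoeffding-type tail bound derived in the paragraphs just above the theorem statement. The hypotheses of Theorem~\ref{thmsecurity1} are precisely what is needed to activate each of these ingredients: $\delta < d_{r_n,m_n}/n$ combined with $p_{a,x} + \epsilon_\secur \le \delta/2$ gives $p_{a,x} + \epsilon_\secur < d_{r_n,m_n}/(2n)$, which is exactly the inequality between thresholds assumed before~\eqref{PcPkInequality}.

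First, I would invoke Theorem~\ref{thmsecurity} to bound $\langle \Delta_\Eve^{(p_{a,z}, p_{a,x})}(\bk,\bk')\rangle$ by $2m_n$ times the square root of the joint probability appearing on the right-hand side of~\eqref{boundotherbasis}. Then I would apply~\eqref{PcPkInequality} to drop the TEST-Z clause and replace the threshold $d_{r_n,m_n}/(2n)$ by the strictly larger quantity $p_{a,x} + \epsilon_\secur$, obtaining a cleaner probability involving only $\widetilde{\mrv{C}_I}$ and $\mrv{C}_{T_X}$.

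Second, I would bound this surviving probability by the Hoeffding estimate already worked out above the theorem statement. The conceptual point, which I would re-emphasize, is that conditional on Eve's (fixed) collective attack $U$ and on the TEST-Z positions, the hypothetical $N$-bit string of errors in the $x$-basis-encoding-of-INFO scenario is determined independently of the INFO/TEST-X split; hence the $n$ INFO indices form a uniform sample without replacement of size $n$ from a population of $n+n_x$ predetermined bits, and Hoeffding's theorem (Appendix~A of~\cite{BGM09}) applies. The rewriting of the two threshold conditions into the single condition $\overline{X} - \mu > \frac{n_x}{n+n_x}\epsilon_\secur$, carried out in~\eqref{inequalities}, then yields the exponential bound $e^{-2(n_x/(n+n_x))^2 n \epsilon_\secur^2}$.

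Finally, I would collect the pieces: taking the square root turns the factor of $2$ in the Hoeffding exponent into a $1$, producing $e^{-(n_x/(n+n_x))^2 n\epsilon_\secur^2}$; and replacing the prefactor $2m_n$ by $2R\,n$ via $m_n \le Rn$ gives~\eqref{EveInfoBound2}. The only genuinely subtle step, and the one I would scrutinize most carefully, is the probabilistic independence underlying the sampling-without-replacement argument: one must check that Eve's choice of attack is made before (hence independently of) Alice's partition $\mathcalus{P}$, and that $\widetilde{\mrv{C}_I}$ is well-defined as a counterfactual random variable on the same probability space as $\mrv{C}_{T_X}$. Everything else is bookkeeping and the application of bounds already proved.
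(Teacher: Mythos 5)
Your proposal follows essentially the same route as the paper: invoke Theorem~\ref{thmsecurity}, pass to~\eqref{PcPkInequality}, apply the Hoeffding sampling bound worked out in Subsection~\ref{sec_proof}, take the square root, and absorb $m_n \le R\,n$ into the prefactor. One cosmetic slip worth fixing: $p_{a,x}+\epsilon_\secur$ is strictly \emph{smaller} than $d_{r_n,m_n}/(2n)$, not larger---it is precisely the lowering of the threshold that enlarges the event and makes~\eqref{PcPkInequality} an upper bound.
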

In Subsection~\ref{sec_rel_rate_thres} we explain why
the vectors required by this Theorem exist.

We note that the quantity
$\langle \Delta_\Eve^{(p_{a,z}, p_{a,x})}(\bk, \bk')\rangle$ bounds
the expected values of the Shannon Distinguishability and of the
mutual information between Eve and the final key,
as done in~\cite{BGM09} and~\cite{BBBMR06},
which is sufficient for proving non-composable security;
but it also avoids composability problems:
Eve is not required to measure immediately after the protocol ends,
but she is allowed to wait until she gets more information.
In Subsection~\ref{sec_full_composability} we use this bound
for proving a fully composable security.

\subsection{\label{rel_proof}Reliability}
Security itself is not sufficient; we also need the key to be reliable
(namely, to be the same for Alice and Bob). This means that we should
make sure that the number of errors on the INFO bits is less than
the maximal number of errors that can be corrected by the
error-correcting code. We demand that our error-correcting code can
correct $n(p_{a,z}+\epsilon_\rel)$ errors (we explain in
Subsection~\ref{sec_rel_rate_thres} why this demand is satisfied).
Therefore, reliability of the final key with exponentially small
probability of failure is guaranteed by the following inequality:
(as said, $\mrv{C}_I$ corresponds to the actual bit string of errors
on the INFO bits in the protocol, when they are encoded
in the $z$ basis)
\begin{equation}
P\left[ \left(\frac{|\mrv{C}_I|}{n} > p_{a,z} + \epsilon_\rel\right)
\wedge \left(\frac{|\mrv{C}_{T_Z}|}{n_z} \leq p_{a,z}\right) \right]
\leq e^{-2 \left(\textstyle\frac{n_z}{n + n_z}\right)^2
n\epsilon_\text{rel}^2}
\end{equation}
This inequality is proved by an argument similar to the one used in
Subsection~\ref{sec_proof}: the selection of the indexes of the INFO
bits and the TEST-Z bits is a random partition of $n + n_z$ bits into
two subsets of sizes $n$ and $n_z$, respectively (assuming that the
indexes of the TEST-X bits have already been chosen),
and thus it corresponds to Hoeffding's sampling.

\subsection{\label{sec_full_composability}Proof of Fully Composable
Security}
We now prove that the BB84-INFO-$z$ protocol satisfies the definition
of composable security for a QKD protocol: namely, that it satisfies
equation~\eqref{definition_composable_security} presented in
Subsection~\ref{qkd_security_definitions}. We prove that the expression
$\frac{1}{2} \tr \left| \rho_{ABE} - \rho_U \otimes \rho_E \right|$
is exponentially small in $n$, with $\rho_{ABE}$ being the actual
joint state of Alice, Bob, and Eve;
$\rho_U$ being an ideal (random, secret, and shared)
key distributed to Alice and Bob;
and $\rho_E$ being the partial trace of $\rho_{ABE}$
over the system $AB$.

To make reading easier, we use the following notations,
where $\bi$ is the bit string sent by Alice,
$\bi^B$ is the bit string received by Bob,
and $\bc = \bi \oplus \bi^B$ is the string of errors:
\begin{eqnarray}
\bi^{AB}_\mathcalus{T} &\triangleq&
\left( \bi_\bz, \bi_\bb, \bi^B_\bz, \bi^B_\bb \right) \\
\bT &\triangleq& \begin{cases} 
1 & \text{if $\frac{|\bc_\bz|}{n_z} \leq p_{a,z}$ and
$\frac{|\bc_\bb|}{n_x} \leq p_{a,x}$} \\
0 & \text{otherwise} \end{cases}
\end{eqnarray}
In other words, $\bi^{AB}_\mathcalus{T}$ consists of all the TEST-Z
and TEST-X bits of Alice and Bob; and $\bT$ is the random variable
representing the result of the test.

According to the above definitions,
the states $\rho_{ABE}$ and $\rho_U$ are
\begin{eqnarray}
\rho_{ABE} &=& \sum_{\bi,\bi^B,\mathcalus{P} \mid \bT = 1}
P \left( \bi, \bi^B, \mathcalus{P} \right) \cdot
\ket{\bk}_A \bra{\bk}_A \otimes \ket{\bk'}_B \bra{\bk'}_B \nonumber \\
&\otimes& \left( \rho_{\bx,\bx^B}^{\bb'} \right)_E \otimes
\ket{\bi^{AB}_\mathcalus{T}, \mathcalus{P}, \bxi}_C
\bra{\bi^{AB}_\mathcalus{T}, \mathcalus{P}, \bxi}_C \\
\rho_U &=& \frac{1}{2^m} \sum_{\bk} \ket{\bk}_A \bra{\bk}_A \otimes
\ket{\bk}_B \bra{\bk}_B,
\end{eqnarray}
where $\left( \rho_{\bx,\bx^B}^{\bb'} \right)_E$
is defined to be Eve's quantum state if Alice sends
the INFO string $\bx = \bi_\bs$ in the bases $\bb' = \bb_\bs$
and Bob gets the INFO string $\bx^B = \bi^B_\bs$.
All the other states actually represent classical information:
subsystems $A$ and $B$ represent the final keys
held by Alice ($\bk = \bx P_K^\transp$) and Bob ($\bk'$,
that is obtained from $\bx^B$, $\bxi = \bx P_C^\transp$, and $P_K$),
and subsystem $C$ represents the information published
in the unjammable classical channel during the protocol
(this information is known to Alice, Bob, and Eve) --
namely, $\bi^{AB}_\mathcalus{T}$ (all the test bits),
$\mathcalus{P}$ (the partition),
and $\bxi = \bx P_C^\transp$ (the syndrome).

We note that in the definition of $\rho_{ABE}$, we sum only over
the events in which the test is \emph{passed}
(namely, in which the protocol is not aborted by Alice and Bob):
in such cases, an $m$-bit key is generated.
The cases in which the protocol aborts do not exist in the sum --
namely, they are represented by the zero operator,
as required by the definition of composable security
(see Subsection~\ref{qkd_security_definitions}
and~\cite[Subsection 6.1.2]{renner_thesis08}).
Thus, $\rho_{ABE}$ is a non-normalized state, and $\tr(\rho_{ABE})$
is the probability that the test is passed.

To help us bound the trace distance, we define
the following intermediate state:
\begin{eqnarray}
\rho_{ABE}' &\triangleq& \sum_{\bi,\bi^B,\mathcalus{P} \mid \bT = 1}
P \left( \bi, \bi^B, \mathcalus{P} \right) \cdot
\ket{\bk}_A \bra{\bk}_A \otimes \ket{\bk}_B \bra{\bk}_B \nonumber \\
&\otimes& \left( \rho_{\bx,\bx^B}^{\bb'} \right)_E \otimes
\ket{\bi^{AB}_\mathcalus{T}, \mathcalus{P}, \bxi}_C
\bra{\bi^{AB}_\mathcalus{T}, \mathcalus{P}, \bxi}_C
\end{eqnarray}
This state is identical to $\rho_{ABE}$, except that Bob holds
the Alice's final key ($\bk$) instead of his own calculated final key
($\bk'$). In particular, the similarity between
$\rho_{ABE}$ and $\rho_{ABE}'$ means, by definition, that
$\rho_E \triangleq \tr_{AB} \left( \rho_{ABE} \right)$
and $\rho_E' \triangleq \tr_{AB} \left( \rho_{ABE}' \right)$
are the same state: namely, $\rho_E = \rho_E'$.

\begin{proposition}
Under the same conditions as Theorem~\ref{thmsecurity1},
it holds that
\begin{eqnarray}
\frac{1}{2} \tr \left| \rho_{ABE}' - \rho_U \otimes \rho_E \right|
\le 2R\, n e^{- \left(\textstyle\frac{n_x}{n + n_x}\right)^2 n
\epsilon^2_\secur},
\end{eqnarray}
for $\rho_{ABE}'$ and $\rho_U$ defined above and for the partial trace
$\rho_E \triangleq \tr_{AB} \left( \rho_{ABE} \right)$.
\end{proposition}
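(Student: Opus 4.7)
The plan is to exploit the classical--quantum block structure of $\rho_{ABE}'$ and $\rho_U\otimes\rho_E$ in the registers $A$, $B$, $C$, reduce the trace distance to a weighted average of the pairwise distances $\tfrac12\tr|\widehat{\rho}_{\bk_0}-\widehat{\rho}_\bk|$ already controlled by Proposition~\ref{probbound}, and then invoke Theorem~\ref{thmsecurity1}. Since $A$, $B$, $C$ carry orthonormal (classical) information, both states are block-diagonal in $(A,B,C)$ and supported only on the diagonal $A=B$. Writing $\mathcalus{T}_0 = (\bi^{AB}_\mathcalus{T},\mathcalus{P},\bxi)$ for the classical transcript and $\omega_\bk(\mathcalus{T}_0)$ for the (unnormalized) conditional Eve state at $(A,B,C)=(\bk,\bk,\mathcalus{T}_0)$, the decomposition reads
\begin{equation*}
\tfrac{1}{2}\tr\left|\rho_{ABE}' - \rho_U\otimes\rho_E\right| = \tfrac{1}{2}\sum_{\bk_0,\mathcalus{T}_0}\left\|\omega_{\bk_0}(\mathcalus{T}_0) - \tfrac{1}{2^m}\sum_{\bk}\omega_\bk(\mathcalus{T}_0)\right\|_1,
\end{equation*}
where the equality $\rho_E=\rho_E'$ established in the setup is used to identify the conditional state coming from $\rho_U\otimes\rho_E$ as $\tfrac{1}{2^m}\sum_\bk\omega_\bk(\mathcalus{T}_0)$.

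The key intermediate step is the factorisation
\begin{equation*}
\omega_\bk(\mathcalus{T}_0) \;=\; p(\mathcalus{T}_0,\bk)\,\widehat{\rho}_\bk, \qquad p(\mathcalus{T}_0,\bk) = p(\mathcalus{T}_0)/2^m.
\end{equation*}
Its proof uses three ingredients: (i) for a bitwise collective attack, marginalising $(\rho^{\bb'}_{\bx,\bx^B})_E$ over Bob's INFO outcomes $\bx^B$ with weights $P(\bx^B\mid\bx,\bb'_\bs)$ returns $\rho_\bx^{\bb'}$; (ii) the contributions of the partition and of the test-bit probabilities factor out and are independent of $\bk$ and $\bx$; and (iii) the fibre $\{\bx:\bx P_C^\transp=\bxi,\;\bx P_K^\transp=\bk\}$ has size exactly $2^{n-r-m}$, so that the residual sum over $\bx$ reproduces definition~\eqref{rhohatk_multi} of $\widehat{\rho}_\bk$. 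The factor $1/2^m$ reflects the uniformity of $\bk$ conditional on $\mathcalus{T}_0$, which in turn comes from Alice's uniform draw of $\bi$.

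Combining this factorisation with the identity $\omega_{\bk_0}(\mathcalus{T}_0)-\tfrac{1}{2^m}\sum_\bk\omega_\bk(\mathcalus{T}_0) = \tfrac{1}{2^m}\sum_\bk\bigl[\omega_{\bk_0}(\mathcalus{T}_0)-\omega_\bk(\mathcalus{T}_0)\bigr]$ and the triangle inequality yields
\begin{equation*}
\tfrac{1}{2}\tr\left|\rho_{ABE}'-\rho_U\otimes\rho_E\right| \;\leq\; \frac{1}{2^{2m}}\sum_{\bk_0,\bk}\sum_{\mathcalus{T}_0\,:\,\bT=1} p(\mathcalus{T}_0)\cdot\tfrac{1}{2}\tr\left|\widehat{\rho}_{\bk_0}-\widehat{\rho}_\bk\right|.
\end{equation*}
The restriction $\bT=1$ is automatic because $\omega_\bk(\mathcalus{T}_0)=0$ when the test fails, matching the cutoff in~\eqref{defipa}; after marginalising $\bi^{AB}_\mathcalus{T}$ down to $(\bc_\bz,\bc_\bb)$ the inner sum is exactly $\langle\Delta_\Eve^{(p_{a,z},p_{a,x})}(\bk_0,\bk)\rangle$ of~\eqref{eqipa}, which Theorem~\ref{thmsecurity1} bounds by $2Rn\,e^{-\left(n_x/(n+n_x)\right)^2 n\epsilon_\secur^2}$; the outer $2^{2m}$ summands then cancel the $1/2^{2m}$ prefactor and deliver the claimed bound.

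The main obstacle is the clean operator identity $\omega_\bk(\mathcalus{T}_0) = p(\mathcalus{T}_0,\bk)\,\widehat{\rho}_\bk$: once in place, everything downstream is bookkeeping plus an invocation of Theorem~\ref{thmsecurity1}, but establishing it requires carefully separating the test and INFO contributions in $P(\bi,\bi^B,\mathcalus{P})$ using both the product structure of a collective attack and the uniform distribution of $\bi$.
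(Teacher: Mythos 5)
Your proposal is correct and follows essentially the same route as the paper: both factorize $P(\bi,\bi^B,\mathcalus{P})$ using the uniformity of $\bx$ and the linear independence of the rows of $P_C$ and $P_K$ to collapse the conditional Eve state into $p(\mathcalus{T}_0)\,2^{-m}\,\widehat{\rho}_\bk$, then compare with $\rho_U\otimes\rho_E$ block by block via the triangle inequality and invoke Theorem~\ref{thmsecurity1} through $\langle\Delta_\Eve^{(p_{a,z},p_{a,x})}(\bk,\bk'')\rangle$. Your explicit statement of the block-diagonal decomposition and of the marginalization over $\bx^B$ only makes visible steps the paper leaves implicit.
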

\begin{proof}
We notice that in $\rho_{ABE}'$, the only factors depending directly
on $\bx$ and $\bx^B$ (and not only on $\bk$ and $\bxi$) are
the probability $P \left( \bi, \bi^B, \mathcalus{P} \right)$
and Eve's state $\left( \rho_{\bx,\bx^B}^{\bb'} \right)_E$.
The probability can be reformulated as:
\begin{eqnarray}
P \left( \bi, \bi^B, \mathcalus{P} \right) &=& P \left(
\bi^{AB}_\mathcalus{T}, \mathcalus{P}, \bxi \right) \cdot
P \left( \bk \mid \bi^{AB}_\mathcalus{T}, \mathcalus{P}, \bxi \right)
\nonumber \\
&\cdot& P \left( \bx \mid \bk, \bi^{AB}_\mathcalus{T}, \mathcalus{P}, \bxi
\right) \cdot P \left( \bx^B \mid \bx, \bk, \bi^{AB}_\mathcalus{T},
\mathcalus{P}, \bxi \right) \nonumber \\
&=& P \left( \bi^{AB}_\mathcalus{T}, \mathcalus{P}, \bxi \right)
\cdot \frac{1}{2^m} \cdot \frac{1}{2^{n-r-m}} \cdot
P \left( \bx^B \mid \bx, \bk, \bi^{AB}_\mathcalus{T},
\mathcalus{P}, \bxi \right)
\end{eqnarray}
(Because all the possible $n$-bit values of $\bx$
have the same probability, $\frac{1}{2^n}$; and because all the
$r+m$ rows of the matrices $P_C$ and $P_K$ are linearly independent,
so there are exactly $2^{n-r-m}$ values of $\bx$ corresponding to
each specific pair $(\bxi, \bk)$.)

Therefore, the state $\rho_{ABE}'$ takes the following form:
\begin{eqnarray}
\rho_{ABE}' &=& \frac{1}{2^m} \sum_{\bk,\bi^{AB}_\mathcalus{T},
\mathcalus{P},\bxi \mid \bT = 1}
P \left( \bi^{AB}_\mathcalus{T}, \mathcalus{P}, \bxi \right) \cdot
\ket{\bk}_A \bra{\bk}_A \otimes \ket{\bk}_B \bra{\bk}_B \nonumber \\
&\otimes& \left[ \frac{1}{2^{n-r-m}} \sum_{\bx,\bx^B
\big|{\scriptsize\begin{matrix}\bx P_C^\transp = \bxi\\
\bx P_K^\transp = \bk\end{matrix}}}
P \left( \bx^B \mid \bx, \bk, \bi^{AB}_\mathcalus{T}, \mathcalus{P},
\bxi \right)
\cdot \left( \rho_{\bx,\bx^B}^{\bb'} \right)_E \right] \nonumber \\
&\otimes& \ket{\bi^{AB}_\mathcalus{T}, \mathcalus{P}, \bxi}_C
\bra{\bi^{AB}_\mathcalus{T}, \mathcalus{P}, \bxi}_C \nonumber \\
&=& \frac{1}{2^m}
\sum_{\bk,\bi^{AB}_\mathcalus{T},\mathcalus{P},\bxi \mid \bT = 1}
P \left( \bi^{AB}_\mathcalus{T}, \mathcalus{P}, \bxi \right) \cdot
\ket{\bk}_A \bra{\bk}_A \otimes \ket{\bk}_B \bra{\bk}_B \nonumber \\
&\otimes& \left( \widehat{\rho}_\bk \right)_E
\otimes \ket{\bi^{AB}_\mathcalus{T}, \mathcalus{P}, \bxi}_C
\bra{\bi^{AB}_\mathcalus{T}, \mathcalus{P}, \bxi}_C
\end{eqnarray}
($\widehat{\rho}_\bk$ was defined in equation~\eqref{rhohatk_multi}.)

The partial trace $\rho_E' = \tr_{AB} \left( \rho_{ABE}' \right)$,
that (as proved above) is the same as $\rho_E$, is
\begin{equation}
\rho_E = \rho_E' = \frac{1}{2^m} \sum_{\bk,\bi^{AB}_\mathcalus{T},
\mathcalus{P},\bxi \mid \bT = 1}
P \left( \bi^{AB}_\mathcalus{T}, \mathcalus{P}, \bxi \right)
\cdot \left( \widehat{\rho}_\bk \right)_E
\otimes \ket{\bi^{AB}_\mathcalus{T}, \mathcalus{P}, \bxi}_C
\bra{\bi^{AB}_\mathcalus{T}, \mathcalus{P}, \bxi}_C,
\end{equation}
and the state $\rho_U \otimes \rho_E$ is
\begin{eqnarray}
\rho_U \otimes \rho_E &=& \frac{1}{2^{2m}}
\sum_{\bk,\bk'',\bi^{AB}_\mathcalus{T},\mathcalus{P},\bxi \mid \bT = 1}
P \left( \bi^{AB}_\mathcalus{T}, \mathcalus{P}, \bxi \right) \cdot
\ket{\bk}_A \bra{\bk}_A \otimes \ket{\bk}_B \bra{\bk}_B \nonumber \\
&\otimes& \left( \widehat{\rho}_{\bk''} \right)_E
\otimes \ket{\bi^{AB}_\mathcalus{T}, \mathcalus{P}, \bxi}_C
\bra{\bi^{AB}_\mathcalus{T}, \mathcalus{P}, \bxi}_C.
\end{eqnarray}
By using the triangle inequality for norms, since $\rho_{ABE}'$
and $\rho_U \otimes \rho_E$ are the same (except the difference
between Eve's states, $\left( \widehat{\rho}_{\bk} \right)_E$
and $\left( \widehat{\rho}_{\bk''} \right)_E$), we get, by using
the definition of $\langle \Delta_\Eve^{(p_{a,z}, p_{a,x})}(\bk, \bk'')
\rangle$ (equation~\eqref{eqipa}) and Theorem~\ref{thmsecurity1}:
\begin{eqnarray}
\frac{1}{2} \tr \left| \rho_{ABE}' - \rho_U \otimes \rho_E \right|
&\le& \frac{1}{2^{2m}} \sum_{\bk,\bk'',\bi^{AB}_\mathcalus{T},
\mathcalus{P},\bxi \mid \bT = 1}
P \left( \bi^{AB}_\mathcalus{T}, \mathcalus{P}, \bxi \right)
\cdot \frac{1}{2} \tr \left| \left( \widehat{\rho}_{\bk} \right)_E -
\left( \widehat{\rho}_{\bk''} \right)_E \right| \nonumber \\
&=& \frac{1}{2^{2m}} \sum_{\bk,\bk''}
\langle \Delta_\Eve^{(p_{a,z}, p_{a,x})}(\bk, \bk'') \rangle
\nonumber \\
&\le& 2R\, n e^{- \left(\textstyle\frac{n_x}{n + n_x}\right)^2 n
\epsilon^2_\secur}
\end{eqnarray}
as we wanted.
\end{proof}

We still have to bound the following difference:
\begin{eqnarray}
\rho_{ABE} - \rho_{ABE}' &=&
\sum_{\bi,\bi^B,\mathcalus{P} \mid \bT = 1}
P \left( \bi, \bi^B, \mathcalus{P} \right) \nonumber \\
&\cdot& \ket{\bk}_A \bra{\bk}_A
\otimes \left[ \ket{\bk'}_B \bra{\bk'}_B  -
\ket{\bk}_B \bra{\bk}_B \right] \nonumber \\
&\otimes& \left( \rho_{\bx,\bx^B}^{\bb'} \right)_E
\otimes \ket{\bi^{AB}_\mathcalus{T}, \mathcalus{P}, \bxi}_C
\bra{\bi^{AB}_\mathcalus{T}, \mathcalus{P}, \bxi}_C \nonumber \\
&=& P \left( \left( \bk \ne \bk' \right)
\wedge \left( \bT = 1 \right) \right) \nonumber \\
&\cdot& \sum_{\bi,\bi^B,\mathcalus{P}}
P \left( \bi, \bi^B, \mathcalus{P} \mid \left( \bk \ne \bk' \right)
\wedge \left( \bT = 1 \right) \right) \nonumber \\
&\cdot& \ket{\bk}_A \bra{\bk}_A
\otimes \left[ \ket{\bk'}_B \bra{\bk'}_B  -
\ket{\bk}_B \bra{\bk}_B \right] \nonumber \\
&\otimes& \left( \rho_{\bx,\bx^B}^{\bb'} \right)_E
\otimes \ket{\bi^{AB}_\mathcalus{T}, \mathcalus{P}, \bxi}_C
\bra{\bi^{AB}_\mathcalus{T}, \mathcalus{P}, \bxi}_C
\end{eqnarray}
Because the trace distance between every two normalized
states is bounded by $1$, and because of the reliability proof
in Subsection~\ref{rel_proof}, we get:
\begin{equation}
\frac{1}{2} \tr \left| \rho_{ABE} - \rho_{ABE}' \right|
\le P \left( \left( \bk \ne \bk' \right)
\wedge \left( \bT = 1 \right) \right)
\le e^{-2 \left(\textstyle\frac{n_z}{n + n_z}\right)^2
n\epsilon_\text{rel}^2}
\end{equation}
(Because if $\bk \ne \bk'$, Alice and Bob have different final keys,
and this means that the error correction stage did not succeed.
According to the discussion in Subsection~\ref{rel_proof},
this can happen only if there are too many errors
in the information string -- namely, if
$\frac{|\mrv{C}_I|}{n} > p_{a,z} + \epsilon_\rel$.)

To sum up, we get the following bound:
\begin{eqnarray}
\frac{1}{2} \tr \left| \rho_{ABE} - \rho_U \otimes \rho_E \right|
&\le& \frac{1}{2} \tr \left| \rho_{ABE} - \rho_{ABE}' \right|
+ \frac{1}{2} \tr \left| \rho_{ABE}' - \rho_U \otimes \rho_E \right|
\nonumber \\
&\le& e^{-2 \left(\textstyle\frac{n_z}{n + n_z}\right)^2
n\epsilon_\text{rel}^2}
+ 2R\, n e^{- \left(\textstyle\frac{n_x}{n + n_x}\right)^2 n
\epsilon^2_\secur}
\end{eqnarray}
This bound is exponentially small in $n$.
Thus, we have proved the composable security of BB84-INFO-$z$.

\subsection{\label{sec_rel_rate_thres}Security, Reliability,
and Error Rate Threshold}
According to Theorem~\ref{thmsecurity1} and to the discussion in
Subsection~\ref{rel_proof}, to get both security and reliability
we only need vectors $\{v^n_1, \ldots, v^n_{r_n+m_n}\}$
satisfying both the conditions of the Theorem (distance
$\frac{d_{r_n,m_n}}{2n} > \frac{\delta}{2}
\geq p_{a,x} + \epsilon_\secur$)
and the reliability condition
(the ability to correct $n(p_{a,z}+\epsilon_\rel)$ errors).
Such families were proven to exist in Appendix~E of~\cite{BBBMR06},
giving the following upper bound on the bit-rate:
\begin{eqnarray}
R_{\mathrm{secret}} \triangleq \frac{m}{n}
< 1 - H_2(2 p_{a,x} + 2 \epsilon_\secur)
- H_2 \left( p_{a,z} + \epsilon_\rel + \frac{1}{n} \right)
\end{eqnarray}
where $H_2(x) \triangleq -x \log_2(x) -(1-x) \log_2(1-x)$.

Note that we use here the error thresholds $p_{a,x}$ for security and
$p_{a,z}$ for reliability. This is possible,
because in~\cite{BBBMR06} those conditions (security and reliability)
on the codes are discussed separately.

To get the asymptotic error rate thresholds, we require
$R_{\mathrm{secret}} > 0$, and we get the condition:
\begin{equation}
H_2(2 p_{a,x} + 2 \epsilon_\secur) + H_2 \left( p_{a,z}
+ \epsilon_\rel + \frac{1}{n} \right) < 1
\end{equation}

The secure asymptotic error rate thresholds zone is shown in
Figure~\ref{fig:security} (it is below the curve),
assuming that $\frac{1}{n}$
is negligible. Note the trade-off between the error rates $p_{a,z}$ and
$p_{a,x}$. Also note that in the case $p_{a,z} = p_{a,x}$, we get the
same threshold as BB84 (\cite{BBBMR06} and~\cite{BGM09}), which is
7.56\%.

\begin{figure}
\includegraphics[width=\linewidth]{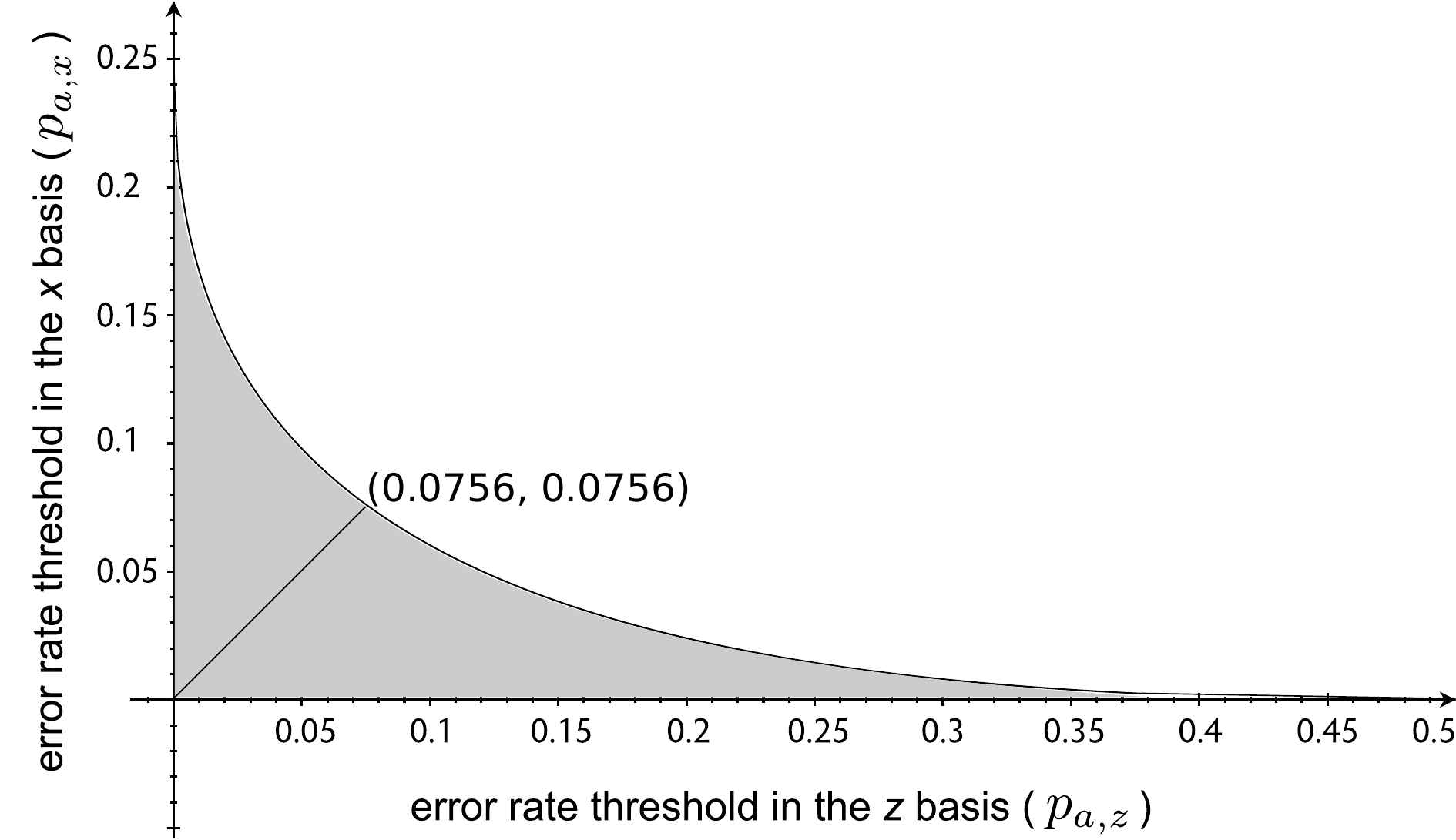}
\caption{\label{fig:security}\textbf{The secure asymptotic error rates
zone for BB84-INFO-$z$} (below the curve)}
\end{figure}

\section{\label{conclusion}Discussion}
In the current paper, we have proved the BB84-INFO-$z$ protocol
to be fully secure against collective attacks.
We have discovered that the results of BB84 hold very similarly
for BB84-INFO-$z$, with only two exceptions:
\begin{enumerate}
\item The error rates must be \underline{separately} checked to be
below the thresholds $p_{a,z}$ and $p_{a,x}$ for the TEST-Z and
TEST-X bits, respectively, while in BB84 the
error rate threshold $p_a$ applies to all the TEST bits
together.
\item The exponents of Eve's information (security) and of the
failure probability of the error-correcting code (reliability)
are different than in~\cite{BGM09}, because different numbers
of test bits are now allowed ($n_z$ and $n_x$ are arbitrary).
This implies that the exponents may decrease more slowly
(or more quickly) as a function of $n$.
However, if we choose $n_z = n_x = n$ (thus sending $N = 3n$
qubits from Alice to Bob), then we get exactly the same exponents
as in~\cite{BGM09}.
\end{enumerate}

The asymptotic error rate thresholds found in this paper
allow us to tolerate a higher
threshold for a specific basis (say, the $x$ basis) if we demand a
lower threshold for the other basis ($z$). If we choose the same
error rate threshold for both bases, then the asymptotic bound is
7.56\%, exactly the bound found for BB84
in~\cite{BBBMR06} and~\cite{BGM09}.

We conclude that even if we change the BB84 protocol to have INFO bits
only in the $z$ basis, this does not harm its security and reliability
(at least against collective attacks).
This does not even change the asymptotic error rate threshold.
The only drawbacks of this change are the need to check the
error rate for the two bases separately,
and the need to either send more qubits
($3n$ qubits in total, rather than $2n$)
or get a slower exponential decrease of the exponents required
for security and reliability.

We thus find that the feature of BB84, that both bases are used for
information, is not very important for security and reliability,
and that BB84-INFO-$z$ (that lacks this feature)
is almost as useful as BB84.
This may have important implications on the security and reliability of
other protocols that, too, use only one basis for information qubits,
such as~\cite{Mor98} and some two-way protocols~\cite{cbob07,calice09}.

We also present a better approach for the proof, that uses the quantum
distance between two states rather than the classical information.
In~\cite{BGM09},~\cite{BBBGM02}, and~\cite{BBBMR06},
the classical mutual information
between Eve's information (after an optimal measurement)
and the final key was calculated
(by using the trace distance between two quantum states);
although we should note that in~\cite{BGM09} and~\cite{BBBMR06},
the trace distance
was used for the proof of security of a single bit of the final key
even when all other bits are given to Eve, and only the last stages
of the proof discussed bounding the classical mutual information.
In the current paper, on the other hand, we use the trace distance
between the two quantum states until the end of the proof,
which allows us to prove fully composable security.

Therefore, our proof shows the fully composable security of
BB84-INFO-$z$ against collective attacks
(and, in particular, security even if Eve keeps her quantum states
until she gets more information when Alice and Bob use the key,
rather than measuring them at the end of the protocol);
and a very similar approach can be applied to~\cite{BGM09},
immediately proving the composable security of BB84
against collective attacks.
Our proof also makes a step towards making the security proof
in~\cite{BBBMR06} (security proof of BB84 against joint attacks)
prove the \emph{composable} security of BB84 against joint attacks.

Our results show that the BB84-INFO-$z$ protocol can be securely used
for distributing a secret key;
the security is of an ideal implementation
and against an adversary limited to collective attacks
(it may be possible to generalize the proof, so that it applies to
the most general attacks (joint attacks),
by using the methods of~\cite{BBBMR06},
but such generalization is beyond the scope of the current paper).
Moreover, the security of the final key is universally composable,
which means that the key may be used for any cryptographic purpose
without harming the security, even if Eve keeps her quantum states
and uses all the information she gets in the future in an optimal way.

The techniques described in our proof may be applied in the future
for proving the security of other protocols by using similar methods,
and, in particular, for proving the security of other QKD protocols
that use only one basis for the information bits,
such as~\cite{Mor98,cbob07,calice09} mentioned above.

We note that this paper strengthens the security proofs
described in~\cite{BGM09,BBBGM02,BBBMR06}, both because it slightly
generalizes them (from security of BB84 to security of BB84-INFO-$z$)
and because it makes them composable.
Those security proofs have various advantages over other methods to
prove security: first of all, they are mostly self-contained,
while other security proofs require many results from other areas
of quantum information (such as various notions of entropy needed for
the security proof of~\cite{renner_thesis08,bb84_sec_renner},
and entanglement purification and quantum error correction needed for
the security proof of~\cite{bb84_sec_SP});
second, they give tight finite-key bounds,
unlike several other methods (see details below);
and finally, at least in some sense, they are simpler than other
proof techniques. On the other hand, their generality and their
asymptotic error-rate threshold (7.56\%, rather than 11\%
given by~\cite{bb84_sec_renner,bb84_sec_SP}) are yet to be improved
by future research.

Our method for proving security gives explicit and tight
finite-key bounds. In contrast to this, the security proof
of~\cite{bb84_sec_SP} gives only asymptotic results
(for infinitely long keys).
For the security proof of~\cite{renner_thesis08,bb84_sec_renner},
it is proved today that for some protocols (including BB84),
one can get tight finite-key bounds~\cite{TLGR12}
that are the same as the ones found by our method;
but at first that security method gave very pessimistic bounds
(by using the de Finetti theorem~\cite{renner_thesis08,Ren07}),
and later, the bounds were improved for several protocols
(including BB84)~\cite{SR08}, but were still not tight
(see~\cite{TLGR12} for comparison).

We also note that the existence of many different proof techniques
is important, because some proofs may be more adjustable to
various QKD protocols or to practical scenarios;
some proofs may be clearer to different readers
with different backgrounds;
analyzing the differences between the proofs
and between their obtained results may lead to important insights
on the strengths and weaknesses of various techniques;
and the existence of many proofs makes the security result
more certain and less prone to errors.

We note that our security proof,
similarly to many other full security proofs of QKD,
assumes an ideal implementation (of ideal quantum systems consisting
of exactly one qubit) and theoretical attacks.
Practical implementations of QKD, almost always using photons, exist
(see~\cite{security_lo14,sec_practical09} for details);
their security analysis is much more complicated, because both Alice's
photon source and Bob's detector devices have weaknesses and deviations
from the theoretical protocol (especially when more than one photon
is emitted by Alice or is sent by the eavesdropper).
Those imperfections give rise to various practical attacks,
such as the ``Photon-Number Splitting'' attack~\cite{BLMS00}
(in which the eavesdropper takes advantage of emissions of
two or more photons by Alice and gets full information)
and the ``Bright Illumination'' attack~\cite{makarov10}
(in which the eavesdropper takes advantage of a weakness
of specific detectors used by Bob and gets full information).

Possible solutions to those problems of actual physical realizations
(see~\cite{security_lo14,sec_practical09} for more details)
include a much more careful analysis
of the practical devices and of practical implementations;
``Measurement-Device Independent''
QKD protocols~\cite{mdi_qkd1,mdi_qkd2,mdi_qkd3,mdi_qkd4},
that may be secure even if the measurement devices are controlled
by the adversary; and ``Device Independent''
QKD protocols~\cite{di_qkd1,di_qkd2,di_qkd3},
that may be secure even if all the quantum devices are controlled
by the adversary (under certain assumptions).

\section*{Acknowledgments}
The work of TM and RL was partly supported
by the Israeli MOD Research and Technology Unit.

\bibliographystyle{elsarticle-num}
\bibliography{security}

\end{document}